\begin{document}

\title{On computing optimal Locally Gabriel Graphs}

\titlerunning{On Computing Optimal Locally Gabriel Graphs}

%
%
\author{
Abhijeet Khopkar
\and
Sathish Govindarajan
}
\authorrunning{Abhijeet Khopkar and Sathish Govindarajan}

\institute{Department of Computer Science and Automation\\
Indian Institute of Science, Bangalore}
%
%

\maketitle
\begin{abstract}
Delaunay and Gabriel graphs are widely studied geometric proximity structures. Motivated by applications in
wireless routing, relaxed versions of these graphs known as \emph{Locally Delaunay Graphs} ($LDGs$) and \emph{Locally Gabriel Graphs} ($LGGs$) were proposed.
We propose another generalization of $LGGs$ called \emph{Generalized Locally Gabriel Graphs} ($GLGGs$) in the context when certain edges are forbidden in the graph.
Unlike a Gabriel Graph, there is no unique $LGG$ or $GLGG$ for a given point set because no edge is necessarily included or excluded.
This property allows us to choose an $LGG/GLGG$ that optimizes a parameter of interest in the graph. 
We show that computing an edge maximum $GLGG$ for a given problem instance is NP-hard and also APX-hard. 
We also show that computing an $LGG$ on a given point set with dilation $\le k$ is NP-hard. Finally, we give an algorithm to verify whether a given geometric graph $G=(V,E)$ is a valid $LGG$.
\end{abstract}

\section{Introduction}
A geometric graph $G=(V,E)$ is an embedding of the set $V$ as points in the plane and the set $E$ as line segments joining two points in $V$.
 Delaunay graphs, Gabriel graphs and Relative neighborhood graphs (RNGs) are classic examples of
geometric graphs that have been extensively studied and have applications in computer graphics, GIS, wireless networks, sensor networks, etc (see survey \cite{JT}).
Gabriel and Sokal \cite{ggg} defined the Gabriel graph as follows:
\begin{definition} \label{gdef}
A geometric graph $G=(V,E)$ is called a Gabriel graph if the following condition holds:
For any $u,v \in V$, an edge $(u,v) \in E$ if and only if the circle with $\overline{uv}$ as diameter does not contain any other point of $V$.
\end{definition}

Gabriel graphs have been used to model the topology in a wireless network \cite{Bose,Urut}. Motivated by applications in wireless routing, Kapoor and Li~\cite{yyy} proposed a relaxed version of Delaunay/Gabriel graphs known as $k$-locally Delaunay/Gabriel graphs.
The edge complexity of these structures has been studied in \cite{yyy,ps04}.
In this paper, we focus on 1-locally Gabriel graphs and call them {\em Locally Gabriel Graphs} ($LGG$s).
\begin{definition}
A geometric graph $G=(V,E)$ is called a Locally Gabriel Graph if for every $(u,v) \in E$, the circle with $\overline{uv}$ as diameter
does not contain any neighbor of $u$ or $v$ in $G$.
\end{definition}
The above definition implies that in an $LGG$, two edges $(u,v) \in E$ and $(u,w) \in E$ {\it conflict} with each other and cannot co-exist 
if $\angle uwv~\ge~\frac{\pi}{2}$ or $\angle uvw \ge \frac{\pi}{2}$. Conversely if edges $(u,v)$ and $(u,w)$ co-exist in an $LGG$,
then $\angle uwv~<~\frac{\pi}{2}$ and $\angle uvw < \frac{\pi}{2}$. We call this condition as {\it LGG constraint}.

Study of these graphs was initially motivated by design of dynamic routing protocols for \emph {ad hoc} wireless networks \cite{Li}. 
An ad-hoc wireless network consists of a collection of wireless transceivers (corresponds to the the points) and an underlying network topology (corresponds to the edges)
that is used for communication.
Like Gabriel Graphs, $LGGs$ are also proximity based structures that capture the interference patterns in wireless networks.
An interesting point to be noted is that there is no unique $LGG$ on a given point set since no edge in an $LGG$ is necessarily included or excluded.
Thus the edge set of the graph (used for wireless communication) can be customized to optimize certain network parameters depending on the application.
While a Gabriel graph has linear number of edges (planar graph), an $LGG$ can be constructed with super-linear number of edges \cite{Erdudg}. A dense network can be desirable for applications like
broadcasting or multicasting where a large number of pairs of nodes need to communicate with each other and links have limited bandwidth.
The dilation or spanning ratio of a graph is an important parameter in wireless network design. It is the maximum ratio of the distance in the network (length of the shortest path)
to the Euclidean distance for any two nodes in a wireless network. Graphs with small spanning ratios are important in many applications and motivate the
study of geometric spanners (refer to \cite{Epp} for a survey). Proximity graphs have been studied for their dilation.
Some interesting bounds for the dilation of Gabriel Graphs were presented in \cite{beta}. In this paper, we initiate study for dilation on $LGGs$.
We show that for certain point sets there exist $LGGs$ with $O(1)$ dilation whereas the Gabriel graph on the same point set has dilation $\Omega(\sqrt{n})$.

In many situations, certain links are forbidden in a network due to physical barriers, visibility constraints or
limited transmission radius. Thus, all proximate pairs of nodes might not induce edges and this effect can be considered in $LGGs$.
Thus, it is natural to study $LGGs$ in the context when the network has to be built only with a set of predefined links. 
In this context, we define a generalized version of $LGGs$ called \emph{Generalized locally Gabriel Graphs}
($GLGGs$). Edges in a $GLGG$ can be picked only from the edges in a given predefined geometric graph.
\begin{definition}
For a given geometric graph $G=(V,E)$ we define $G'=(V,E')$ as GLGG if $G'$ is a valid LGG and $E' \subseteq E$.
\end{definition}

Previous results on $LGGs$ have focused on obtaining combinatorial bounds on the maximum edge complexity. In \cite{yyy}, it was shown that an $LGG$
has at most $O(n^\frac{3}{2})$ edges since $K_{2,3}$ is a forbidden subgraph. Also, it was observed in \cite{ps04} that any unit distance graph is also a valid $LGG$.
Hence there exist $LGGs$ with $\Omega(n^{1+\frac{c}{\log\log n}})$ edges \cite{Erdudg}.
It is not known whether an edge maximum $LGG$ can be computed in polynomial time.
\newpage
\paragraph{Our Contribution:}
We present the following results in this paper.
\begin{enumerate}
\item
We show that computing a $GLGG$ with at least $m$ edges on a given geometric graph $G=(V,E)$ is NP-complete (reduction from 3-SAT) and also APX-hard (reduction from MAX-(3,4)-SAT).
\item
We show that the problem of determining whether there exists an $LGG$ with dilation $\leq k$ is NP-hard by reduction from the partition problem motivated by \cite{nph}.  
We also show that there exists a point set $P$ such that any $LGG$ on $P$ has dilation $\Omega(\sqrt{n})$ that matches with the best known upper bound \cite{beta}.
\item
For a given geometric graph $G=(V,E)$, we give an algorithm with running time $O(|E|\log |V|+|V|)$ to verify whether $G$ is a valid $LGG$.
\end{enumerate}
\section{Hardness of computing an edge maximum $GLGG$}
In this section we show that deciding whether there exists a $GLGG$ on a given geometric graph $G=(V,E)$ with at least $m$ edges for a given value of $m$ is NP-complete by a reduction from 3-SAT.
We further show that computing edge maximum $GLGG$ is APX-hard by showing a reduction from MAX-(3,4)-SAT.

A 3-SAT instance is a conjunction of several clauses and each clause is a disjunction of exactly 3 variables.
Let $\mathcal I$ be an instance of the 3-SAT problem with $k$ clauses $C_1, C_2, \ldots ,C_k$ and $n$ variables $y_1, y_2, \ldots ,y_n$.
A geometric graph $G~=~(V,E)$ is constructed from $\mathcal{I}$ such that there exists a $GLGG$ on $G$ with at least $m$ edges if and only if $\mathcal{I}$ admits a
satisfying assignment.
We construct a vertex set $V$ (points in the plane) of size $(k+3)n + k$ that is partitioned into 
$2n$ literal vertices denoted by $V_1=\{x_i, x_i'\ |\ i \in \{1, \ldots, n\}\}$, $(k+1)n$ variable vertices denoted by $V_2=\{z_{i_j} \ |\ i \in \{1, \ldots, n\}, j \in \{1, \ldots, k+1\}\}$
and $k$ clause vertices denoted by $V_3=\{c_j \ |\ j \in \{1, \ldots, k\}\}$. Thus, $V = V_1 \cup V_2 \cup V_3$.
Two literal vertices $x_i$ and $x'_i$ corresponding to the same variable are called conjugates of each other.

Now let us discuss the placement of these vertices on the plane as shown in Figure~\ref{rfig1}.
All literal vertices are placed closely on a vertical line $l$ and the distance between two consecutive vertices is $10^{-5}$. 
Two conjugate literal vertices corresponding to the same variable are kept next to each other. Let $l_1$ and $l_2$ be two
horizontal lines passing through the highest and the lowest literal vertex respectively. Let $b_0$ be the center point of the line segment containing all the literal vertices.
With $b_0$ as center, a circle is drawn with radius $d_1 = n^4$. All clause vertices $c_1,c_2,\ldots,c_k$ are placed along an arc $a_0$ of this circle (with a distance of $\frac{n}{2}$ 
between two consecutive vertices) with the additional restriction
that these vertices cannot lie between lines $l_1$ and $l_2$. $b_0c_1$ and $b_0c_k$ make an angle less than $\alpha = \frac{\pi}{4}$ with the horizontal axis. 
Now $k+1$ variable vertices are placed for each variable in the 3-SAT instance.
Consider two horizontal lines $l_{x_i}$ and $l_{x'_i}$ passing through literal vertices $x_i$ and $x'_i$. 
With center at the mid point of $x_i$ and $x'_i$ (call it $b_i$)
a circle is drawn with radius $d_2 = 10n^4$. Variable vertices are placed on an arc $a_i$ of this circle on the same side of $l$ where clause vertices are placed. These
vertices $z_{i_1},\ldots, z_{i_{(k+1)}}$ are placed a distance of $\frac{n}{4}$ apart with the restriction that no vertex should be placed between 
lines $l_{x_i}$ and $l_{x'_i}$. Any line connecting these vertices with $x_i$ and $x'_i$ makes an angle less than $\alpha$ with the horizontal axis. For all the variables in $\mathcal{I}$, 
corresponding variable vertices are placed similarly. For simplicity variable vertices are shown corresponding to only one variable in Figure~\ref{rfig1}.
\begin{figure}[ht]
 \centering
\includegraphics[scale=0.65]{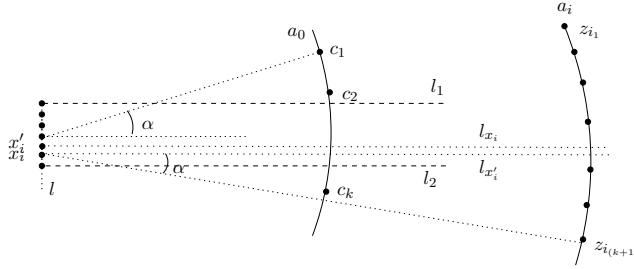}
 \caption{Placement of vertex set $V$}
 \label{rfig1}
\end{figure}
For each clause $C_j$, there are 3 edges between clause vertex $c_j$ and the corresponding literal vertices. Let $E_1$ be the set of these edges from all the clause vertices to three corresponding literal vertices.
For example, if a clause $C_j$ has literals $y_a, y_b$ and $y'_c$,
then the edges $(c_j,x_a),(c_j,x_b)$ and $(c_j,x'_c)$ are included in $E_1$. 
Another set of edges between literal vertices and variable vertices is defined
\begin{eqnarray*}
 E_2 = \{(x_i,z_{i_1}), \ldots, (x_i,z_{i_{k+1}}), (x'_i,z_{i_1}), \ldots, (x'_i,z_{i_{k+1}}) \\
~~~~~~~~~~~~~~~ |\ 1 \le i \le n\}
\end{eqnarray*}

Now, $E = E_1 \cup E_2$. Let $G = (V,E)$ be the geometric graph over which an edge maximum $GLGG$ is to be computed.
Let us analyze the conflicts among the edges in $G$. It should be noted that since a $GLGG$ is also an $LGG$, it
suffices to look at the $LGG$ constraints to determine whether two edges conflict. Consider any $GLGG$ $G'=(V,E')$ with $E' \subseteq E$ on the
geometric graph $G$. The following constraints are observed on the edge set $E'$.

 Since the edges $(x_i,z_{i_j})$ and $(x'_i,z_{i_j})$ conflict with each other 
($\angle z_{i_j}x_ix'_i$ or $\angle z_{i_j}x'_ix_i$ is greater than $\frac{\pi}{2}$ by construction),
a variable vertex $z_{i_j}$ can have an edge incident to only $x_i$ or $x'_i$. 
\begin{remark}\label{rem1}
A variable vertex $z_{i_j}$ can have only one edge ($(x_i,z_{i_j})$ or $(x'_i,z_{i_j})$) incident to it in $E'$.
\end{remark}
Similarly, we can infer Remark~\ref{rem2} due to $LGG$ constraints.
\begin{remark}\label{rem2}
 Any clause vertex $c_j$ can be incident to at most one literal vertex in $E'$.
\end{remark}
It can be observed that two $LGG$ edges that are the radii of the same circle do not conflict with each other.
Here, $b_i$ (the center of arc $a_i$) is close enough to both the literal vertices ($x_i$ and $x'_i$) and the radius $d_2$ is 
chosen large enough so that no two edges from a literal vertex to the corresponding variable vertices conflict with each other.
\begin{remark}\label{rem3}
 A literal vertex $x_i$ (or $x'_i$) can have edges incident to all the corresponding variable vertices $z_{i_j}$ in $E'$ where $j \in \{1,\ldots,k+1\}$.
\end{remark}
Since a literal vertex is placed sufficiently close to $b_0$ (the center of arc $a_0$) and the radius $d_1$ is
chosen large enough, no two edges from a literal vertex to the clause vertices conflict with each other. 
\begin{remark}\label{rem4}
In $E'$, a literal vertex $x_i$ can have edges incident to all the clause vertices that have edges incident to $x_i$ in $E_1$.
\end{remark}
Since $d_2$ is chosen large enough compared to $d_1$, if a literal vertex $x_i$ is connected
to a variable vertex $z_{i_j}$, the circle with $\overline{x_i z_{i_j}}$ as diameter would contain all the clause vertices. 
Therefore, $x_i$ cannot be connected to any clause vertex due to the $LGG$ constraint.
\begin{remark} \label{rem5}
In $E'$, if a literal vertex has an edge incident to a variable vertex, it cannot have an edge incident to any clause vertex.
\end{remark}

\begin{lemma} \label{lem3}
If there exists a $GLGG$ $G'$ on $G$ with at least $(k+1)n+k$ edges, then there exists a satisfying assignment to the given 3-SAT instance $\mathcal{I}$.
\end{lemma}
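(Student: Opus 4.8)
The plan is to show that the hypothesis $|E'| \ge (k+1)n + k$ forces a rigid structure on $G'$ that directly encodes a satisfying assignment. First I would establish a matching upper bound on the number of edges in any GLGG on $G$. By Remark~\ref{rem1}, each of the $(k+1)n$ variable vertices carries at most one edge of $E_2$, so $|E' \cap E_2| \le (k+1)n$; by Remark~\ref{rem2}, each of the $k$ clause vertices carries at most one edge of $E_1$, so $|E' \cap E_1| \le k$. Since $E = E_1 \cup E_2$, adding these gives $|E'| \le (k+1)n + k$. Combined with the hypothesis, both inequalities must hold with equality.

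This saturation is the engine of the argument. The equality $|E' \cap E_2| = (k+1)n$, together with the one-edge-per-vertex cap of Remark~\ref{rem1}, forces \emph{every} variable vertex $z_{i_j}$ to have exactly one incident edge; likewise $|E' \cap E_1| = k$ forces \emph{every} clause vertex $c_j$ to have exactly one incident edge. I would then read off a candidate truth assignment from the clause edges: set $y_i$ to \emph{true} if the positive literal vertex $x_i$ is incident to some clause vertex, and to \emph{false} if the negative literal vertex $x'_i$ is incident to some clause vertex.

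The step I expect to be the crux is proving this assignment is well defined, i.e.\ that for each variable $i$ the vertices $x_i$ and $x'_i$ are never simultaneously incident to clause vertices. Here I would argue by contradiction using Remark~\ref{rem5}: if both $x_i$ and $x'_i$ had clause edges, then by Remark~\ref{rem5} neither could be incident to any variable vertex, so all $k+1$ variable vertices $z_{i_1},\dots,z_{i_{k+1}}$ would be edgeless, contradicting the fact just established that every variable vertex has exactly one incident edge. This is precisely why $k+1$ variable vertices are planted per variable: any lost $E_2$ edge would break the tight count, and this coupling between Remark~\ref{rem5} and the exact edge total is the most delicate point of the reduction. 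For variables where neither literal vertex touches a clause, the truth value may be set arbitrarily.

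Finally I would verify that every clause is satisfied. Each clause vertex $c_j$ has exactly one incident edge, say to the literal vertex corresponding to some literal $\ell$ of clause $C_j$ (by construction of $E_1$, an edge from $c_j$ exists only to the three literal vertices of $C_j$); by the definition of the assignment this literal is made true, so $C_j$ is satisfied. As every one of the $k$ clause vertices contributes such a satisfied literal, the assignment satisfies all clauses of $\mathcal{I}$, completing the proof.
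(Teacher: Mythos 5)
Your proposal is correct and follows essentially the same route as the paper: count edges via Remarks~\ref{rem1} and \ref{rem2} to force saturation (every variable and clause vertex has exactly one incident edge), read the assignment off the clause edges, and use Remark~\ref{rem5} together with the $k+1$ variable vertices per variable to rule out both $x_i$ and $x'_i$ carrying clause edges. Your explicit contradiction argument for well-definedness is just a slightly more carefully spelled-out version of the paper's consistency step.
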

\begin{proof}
Since each variable vertex can have at most one edge incident to it (refer to Remark~\ref{rem1}), at most $(k+1)n$ edges of $E'$ can be selected from $E_2$.
Similarly each clause vertex can have at most one edge incident to it (refer to Remark~\ref{rem2}),
so in $E'$ at most $k$ edges can be selected from $E_1$. If there are $(k+1)n+k$ edges in $E'$, then one edge is incident to each variable vertex and clause vertex.
If there is an edge between a clause vertex $c_j$ and the literal vertex $x_i$ $(\text{resp. }x'_i)$, assign $y_i = 1$ $(\text{resp. }y_i = 0)$ as it satisfies the clause $C_j$. By this rule assign a truth value to a variable in each clause.
If one clause vertex is incident to $x_i$, no other clause vertex can be incident to $x'_i$ as $x'_i$ is connected to the corresponding $k+1$ variable vertices (refer to Remark~\ref{rem5}). Therefore, this
rule would yield a consistent assignment satisfying all the clauses. Hence, the given 3-SAT instance $\mathcal{I}$ is satisfiable.
\end{proof}

\begin{lemma} \label{lem4}
If there is a satisfying assignment to the given 3-SAT instance $\mathcal{I}$, then there exists a $GLGG$ $G'$ over $G$ with at least $(k+1)n+k$ edges.
\end{lemma}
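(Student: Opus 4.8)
The plan is to construct the edge set $E'$ explicitly from a given satisfying assignment and then verify both that $|E'| = (k+1)n + k$ and that $G' = (V,E')$ obeys the LGG constraint, so that it is a valid GLGG. First I would use the assignment to decide, for each variable $y_i$, which of the two conjugate literal vertices carries the variable edges and which is left free for clause edges. Concretely, if $y_i = 1$ I add to $E'$ all $k+1$ edges $(x'_i,z_{i_1}),\ldots,(x'_i,z_{i_{k+1}})$ from the false literal vertex $x'_i$ to its variable vertices, leaving the true literal vertex $x_i$ free; if $y_i = 0$ I do the symmetric thing, routing the $k+1$ variable edges through $x_i$ and leaving $x'_i$ free. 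This mirrors the decoding rule used in the proof of Lemma~\ref{lem3}, where an edge incident to $x_i$ (resp.\ $x'_i$) corresponds to $y_i=1$ (resp.\ $y_i=0$).

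Next, for each clause $C_j$ I use the fact that the assignment is satisfying: at least one of its three literals is true, and I pick one such literal and add the corresponding edge from $E_1$ to $E'$. Such an edge exists in $E_1$ because $C_j$ contains that literal, and by the previous step the chosen literal vertex is exactly the free one, carrying no variable edge. Counting is then immediate: each of the $(k+1)n$ variable vertices receives exactly one edge (the $k+1$ edges from the committed literal, over all $n$ variables), contributing $(k+1)n$ edges, and each of the $k$ clause vertices receives exactly one edge, contributing $k$ edges, for a total of $(k+1)n+k$.

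It then remains to check the LGG constraint. Since that constraint only couples edges sharing a common endpoint, it suffices to inspect each vertex together with its incident edges. Every variable vertex and every clause vertex has degree one in $E'$, so no conflict can arise at such a vertex. Each literal vertex is incident in $E'$ to edges of a \emph{single} type: a free (true) literal vertex carries only clause edges, which pairwise coexist by Remark~\ref{rem4}, whereas a committed (false) literal vertex carries only variable edges, which pairwise coexist by Remark~\ref{rem3}. In particular, the forbidden mixture of a variable edge and a clause edge at the same literal vertex, ruled out by Remark~\ref{rem5}, never occurs in this construction. Hence $G'$ is a valid LGG with $E'\subseteq E$, i.e.\ a GLGG, with $(k+1)n+k$ edges.

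The single point requiring care, and the only place where satisfiability is genuinely used, is the consistency between the two selection steps: I must guarantee that every clause vertex can be attached to a literal vertex that is simultaneously a literal occurring in that clause \emph{and} free of variable edges. This is exactly what choosing a true literal of each clause provides, since the first step leaves precisely the true literal vertices free. Were the assignment not satisfying, some clause would have all three of its literal vertices committed to variable edges, and it could not be served without violating Remark~\ref{rem5}; thus the construction succeeds if and only if the instance is satisfiable.
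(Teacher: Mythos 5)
Your construction is exactly the one in the paper: commit the false literal vertex of each variable to its $k+1$ variable vertices, attach each clause vertex to a true (hence free) literal, and invoke Remarks~\ref{rem3}, \ref{rem4} and \ref{rem5} to see that no conflicts arise. The only difference is that you spell out the degree-based case analysis for the LGG check a bit more explicitly than the paper does, which is fine.
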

\begin{proof}
A $GLGG$ with $(k+1)n+k$ edges can be constructed based on the satisfying assignment to $\mathcal{I}$. If a variable 
$y_i = 1$ $(\text{resp. }y_i = 0)$ then connect
$x'_i$ $(\text{resp. }x_i)$ to the corresponding $k+1$ variable vertices $(z_{i_1}, z_{i_2},\ldots, z_{i_{k+1}})$. Applying this rule to each variable we get $(k+1)n$
edges in $E'$ from $E_2$ and these edges do not conflict with each other (refer to Remark~\ref{rem3}). Since all the clauses will have at least one literal satisfied in this assignment, every clause vertex can have an edge incident to some literal vertex
that has no edges incident to any of the variable vertices.
Consider a clause $C_j$ which is satisfied by the assignment $y_i = 1$ (resp. $y_i = 0$). Add the edge $(c_j,x_i)$ (resp. $(c_j,x_i')$) to $E'$. Since all the clauses are satisfied, $k$ edges from $E_1$ can be added to $E'$.
Therefore, $G'$ has $(k+1)n+k$ edges and it is a valid $GLGG$.
\end{proof}

\begin{theorem}
Deciding whether there exists a $GLGG$ with at least $m$ edges for a given value of $m$ is NP-complete.
\end{theorem}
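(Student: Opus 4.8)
The plan is to establish the two defining properties of NP-completeness in turn: membership in NP, and NP-hardness, with the latter following almost immediately from the biconditional packaged in Lemmas~\ref{lem3} and~\ref{lem4}.

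First I would argue membership in NP. A natural certificate is the edge set $E' \subseteq E$ itself. Given such a candidate, a verifier must confirm two things: that $|E'| \ge m$, which is a trivial count, and that $G' = (V, E')$ is a valid $LGG$ (equivalently a $GLGG$, since $E' \subseteq E$ holds for any proposed certificate). The latter check is exactly the verification problem addressed by the $O(|E|\log|V| + |V|)$ algorithm promised as our third contribution, so the entire certificate can be checked in polynomial time. Hence the problem lies in NP.

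Next I would establish NP-hardness by the reduction from 3-SAT set up above. I would first confirm that the map $\mathcal{I} \mapsto G$ runs in polynomial time: the vertex set has size $(k+3)n + k$, the edge set $E = E_1 \cup E_2$ has $3k + 2(k+1)n$ edges, and every coordinate is determined by the fixed radii $d_1 = n^4$, $d_2 = 10n^4$ and the fixed inter-vertex spacings ($10^{-5}$, $\frac{n}{2}$, $\frac{n}{4}$), all rationals of polynomially bounded bit length. Thus $G$ is produced in time polynomial in the size of $\mathcal{I}$. Setting the target $m = (k+1)n + k$, Lemma~\ref{lem3} gives that a $GLGG$ with at least $m$ edges forces a satisfying assignment, and Lemma~\ref{lem4} gives the converse. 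Together they yield the equivalence that $G$ admits a $GLGG$ with at least $m$ edges if and only if $\mathcal{I}$ is satisfiable, which is precisely a polynomial-time many-one reduction from 3-SAT.

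Since 3-SAT is NP-complete, the reduction proves NP-hardness, and combined with membership in NP this gives NP-completeness. The conceptual weight of the argument has already been discharged in the geometric construction and in the two lemmas; the step I expect to require the most care is verifying that all vertex coordinates in the construction can be represented with polynomially many bits, so that the reduction is genuinely polynomial time and not merely polynomial in the combinatorial size of $G$, together with confirming that the angle and distance inequalities underlying Remarks~\ref{rem1}--\ref{rem5} remain valid under such a bounded-precision placement.
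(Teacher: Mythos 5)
Your proposal is correct and follows essentially the same route as the paper: NP-hardness via the biconditional of Lemmas~\ref{lem3} and~\ref{lem4} applied to the 3-SAT construction with target $m=(k+1)n+k$, plus membership in NP by polynomial-time verification of a candidate edge set. The only difference is that you spell out details the paper leaves implicit (the certificate, the use of the verification algorithm, and the polynomial bit-length of the coordinates), which is a reasonable elaboration rather than a different argument.
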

\begin{proof}
By Lemma~\ref{lem3} and Lemma~\ref{lem4}, this problem is NP-hard. Given a geometric graph $G'$, it can be verified in polynomial time whether
$G'$ is a valid $GLGG$ with at least $m$ edges. Thus, this problem is NP-complete.
\end{proof}

This reduction to argue NP-hardness can be extended further to show inapproximability for computing an edge maximum $GLGG$.
Let us consider the optimization version of 3-SAT known as MAX-3-SAT. Here the objective is to find a binary assignment satisfying the maximum number of clauses.
MAX-(3,4)-SAT is a special case of MAX-3-SAT with an additional restriction that a variable is present in exactly
four clauses. MAX-(3,4)-SAT is shown to be $APX$-hard in \cite{apx}.

Now we enhance our existing construction such that for each variable 
there are 5 variable vertices instead of $k+1$ as described in the previous reduction. Let $G=(V,E)$ be this new geometric graph on which an optimal $GLGG$ has to be computed.
Again edge sets $E_1$ and $E_2$ are defined as earlier. Now, we present the following lemma that helps to prove that computing an edge maximum $GLGG$ is $APX$-hard.

\begin{lemma} \label{apxlem}
If a $GLGG$  $G_1'$ computed over $G$ has less than $5n$ edges from $E_2$ then we can obtain another $GLGG$ $G_2'$ over $G$ with $5n$ edges from $E_2$ and $|E_2'| \ge |E_1'|$.
\end{lemma}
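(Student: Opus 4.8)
The plan is to normalise $G_1'$ variable by variable, exploiting that in the APX construction each variable occurs in exactly four clauses (the defining feature of MAX-(3,4)-SAT) while there are five variable vertices per variable. Throughout, $E_1'$ denotes the edge set of $G_1'$ and $E_2'$ that of the graph $G_2'$ I am building, and the aim is to push the number of edges drawn from $E_2$ up to its maximum possible value $5n$ while guaranteeing $|E_2'| \ge |E_1'|$. By Remark~\ref{rem1} each of the five variable vertices $z_{i_1},\dots,z_{i_5}$ of a variable $i$ carries at most one edge from $E_2$, so at most five edges of $E_2$ per variable and at most $5n$ in total can appear in any $GLGG$; hence $G_1'$ having fewer than $5n$ such edges means some variable has a variable vertex incident to no edge of $G_1'$.

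First I would handle every variable $i$ for which at least one of the conjugate literals $x_i, x_i'$ already carries an edge to some variable vertex in $G_1'$. By Remark~\ref{rem5} such a literal carries no clause edge, so I can attach to it every currently unconnected variable vertex of variable $i$. By Remark~\ref{rem3} these additions do not conflict with that literal's existing edges, and since the freshly joined endpoints become neighbours only of one another (and of no clause vertex or conjugate literal), no $LGG$ constraint that held before is now broken. This phase only adds edges, so it cannot decrease the total.

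The remaining case is a variable $i$ whose five variable vertices are all unconnected in $G_1'$, so neither $x_i$ nor $x_i'$ has a variable-vertex edge. Here I would select one literal, say $x_i$, delete every clause edge incident to it, and then attach all five variable vertices to $x_i$. Deleting edges never destroys the $LGG$ property, the five new edges are mutually non-conflicting by Remark~\ref{rem3}, and by the same neighbour argument as above they leave the conjugate literal's clause edges untouched, so $G_2'$ stays a valid $LGG$. The decisive accounting step is that, because variable $i$ appears in exactly four clauses, $x_i$ was incident to at most four clause edges; thus this exchange removes at most four edges while adding five, a net change of at least $+1$.

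The crux, and the reason the gadget is built with five variable vertices rather than four, is precisely this inequality $5 > 4$: it certifies that swapping a literal's clause edges for a complete set of variable-vertex edges never lowers the edge count. Since the operations on distinct variables act on disjoint literal and variable vertices and only ever delete (never create) clause edges, they may be carried out independently. Applying them to every deficient variable yields a valid $LGG$ $G_2'$ containing exactly $5n$ edges from $E_2$ and satisfying $|E_2'| \ge |E_1'|$, as claimed.
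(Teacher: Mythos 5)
Your proof is correct and follows essentially the same route as the paper's: the same two-case normalisation per variable (some variable vertex already attached to a literal, versus none attached), the same appeals to Remarks~\ref{rem1}, \ref{rem3} and~\ref{rem5}, and the same decisive count that a literal loses at most four clause edges while gaining five variable-vertex edges. Your explicit remark that the operations on distinct variables are independent, and your handling of the case where the conjugate literals each carry some variable-vertex edges, are slightly more careful than the paper's wording but add nothing materially different.
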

\begin{proof}
Initially let $G_2' = G_1'$. In $G_2'$ if a variable vertex $z_{i_j},\ 1 \le j \le 5$ has an edge incident to an associated literal vertex $x_i$, then $x_i$ cannot have an edge
incident to a clause vertex (refer to Remark~\ref{rem5}). Now $x_i$ can have edges incident to all the five variable vertices (refer to Remark~\ref{rem3}).
Therefore, if a variable vertex $z_{i_j}$ has an edge incident to $x_i$ and some other variable vertex $z_{i_{j'}}$ corresponding to the same variable
has no edge incident to it, then an edge $(x_i,z_{i_{j'}})$ can be added to $E_2'$ without conflicting with any existing edge.

If no vertex $z_{i_j}, \ 1 \le j \le 5$ has an edge incident to $x_i$, the solution can be improved locally. Add the edges $\{(x_i,z_{i_j}) | 1 \le j \le 5\}$ to $E_2'$
and remove any edges connecting $x_i$ to the clause vertices from $E_2'$. Note that a variable occurs only in four
clauses in a MAX-(3,4)-SAT instance, so a literal vertex cannot have edges incident to more than four clause vertices. Therefore, this transformation implies $|E_2'| \ge |E_1'|$.
Applying this argument to all the variable vertices, it can be ensured that in $G_2'$ every variable vertex has an edge incident to it. Thus, $E_2'$ has $5n$ edges from $E_2$ and $|E_2'| \ge |E_1'|$.
\end{proof}

\begin{theorem}
Computing an edge maximum $GLGG$ on a given geometric graph $G=(V,E)$ is $APX$-hard.
\end{theorem}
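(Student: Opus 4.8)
The plan is to establish an approximation-preserving reduction (an $L$-reduction) from MAX-(3,4)-SAT to the problem of computing an edge maximum $GLGG$, using the enhanced construction with exactly $5$ variable vertices per variable. The goal is to show that a sufficiently good polynomial-time approximation algorithm for edge maximum $GLGG$ would yield a polynomial-time approximation scheme for MAX-(3,4)-SAT, contradicting its known $APX$-hardness. The heart of the argument is to relate, in both directions, the number of satisfied clauses in the SAT instance to the number of $GLGG$ edges, while controlling the additive slack between these quantities so that relative errors are preserved up to constant factors.

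First I would record the structural consequence of Lemma~\ref{apxlem}: any $GLGG$ $G_1'$ may be transformed, without decreasing the total edge count, into a $GLGG$ $G_2'$ in which all $5n$ edges of $E_2$ are present (so every variable vertex is saturated) and $|E_2'| \ge |E_1'|$. This normalization means that when we seek an edge maximum $GLGG$ we may assume the $E_2$ part contributes exactly $5n$ edges and is fixed; the only remaining freedom is how many $E_1$ edges (clause-to-literal edges) can be added. By Remark~\ref{rem5}, once $x_i$ (resp.\ $x_i'$) is connected to its five variable vertices, it can have \emph{no} edge to any clause vertex, whereas its conjugate $x_i'$ (resp.\ $x_i$) remains free to serve clause vertices. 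Thus selecting which literal of each variable receives the variable-vertex edges is exactly a truth assignment, and by Remark~\ref{rem2} each clause vertex contributes at most one $E_1$ edge, realized precisely when the chosen assignment satisfies that clause. Consequently, for a normalized optimal solution, the number of $E_1$ edges equals the maximum number of clauses simultaneously satisfiable, $\mathrm{OPT_{SAT}}$, and the optimal $GLGG$ has exactly $5n + \mathrm{OPT_{SAT}}$ edges.

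Next I would verify the two $L$-reduction inequalities with explicit constants. For the first, since MAX-(3,4)-SAT has each variable in exactly four clauses, the instance has $k = 4n/3$ clauses, and at least half the clauses are always satisfiable, so $\mathrm{OPT_{SAT}} \ge k/2 = 2n/3$, giving $5n \le 8\,\mathrm{OPT_{SAT}}$ and hence $\mathrm{OPT_{GLGG}} = 5n + \mathrm{OPT_{SAT}} \le 9\,\mathrm{OPT_{SAT}}$; this bounds the optimum of the $GLGG$ instance linearly in the SAT optimum. For the second inequality I would take any $GLGG$ solution, apply the Lemma~\ref{apxlem} transformation in polynomial time to normalize it, read off the induced truth assignment, and observe that the number of clauses it satisfies is exactly the number of $E_1$ edges in the normalized solution, namely $(\text{solution size}) - 5n$. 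Therefore $\mathrm{OPT_{SAT}} - (\text{number of satisfied clauses}) = \mathrm{OPT_{GLGG}} - (\text{normalized solution size}) \le \mathrm{OPT_{GLGG}} - (\text{original solution size})$, which is precisely the cost-difference condition required for an $L$-reduction. With both constants in hand, the $L$-reduction composes with the $APX$-hardness of MAX-(3,4)-SAT from \cite{apx} to conclude $APX$-hardness of edge maximum $GLGG$.

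The main obstacle I anticipate is the bookkeeping needed to turn the combinatorial correspondence into clean $L$-reduction inequalities with correct constants, because the additive term $5n$ does not scale with $\mathrm{OPT_{SAT}}$ a priori. The key that dissolves this obstacle is the standard MAX-SAT lower bound $\mathrm{OPT_{SAT}} \ge k/2$ together with the variable-occurrence bound $k = 4n/3$ specific to MAX-(3,4)-SAT; these force $n = O(\mathrm{OPT_{SAT}})$ and hence $5n = O(\mathrm{OPT_{SAT}})$, which is exactly what makes the first $L$-reduction inequality hold. I would emphasize that the transformation of Lemma~\ref{apxlem} is what makes the second inequality tight, since it guarantees that any given solution can be mapped to a valid assignment whose satisfied-clause count matches the edge surplus over $5n$ without loss. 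Once these two points are established, the remaining steps are routine verification that the reduction is polynomial-time computable and that approximation ratios transfer, completing the proof.
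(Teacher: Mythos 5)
Your proposal follows essentially the same route as the paper: the same enhanced construction with five variable vertices per variable, the same normalization via Lemma~\ref{apxlem} yielding the identity $\mathrm{OPT}_{GLGG} = 5n + \mathrm{OPT}_{SAT}$, and the same use of $\mathrm{OPT}_{SAT} \ge k/2$ together with $k = \frac{4}{3}n$ to control the additive $5n$ term. The only difference is packaging --- you phrase the conclusion as an $L$-reduction with explicit constants, whereas the paper directly computes a concrete inapproximability factor ($0.999939$) from the MAX-(3,4)-SAT bound --- and both are valid ways to finish; if anything, your $L$-reduction framing makes the final transfer of hardness slightly more transparent.
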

\begin{proof}
Let $OPT_G$ and $OPT_S$ denote the optimum for the $GLGG$ instance and the MAX-(3,4)-SAT instance respectively. A clause vertex can have
only one edge incident to it (refer to Remark~\ref{rem2}) and 
a $GLGG$ maximizing the edges will have $5n$ edges from $E_2$ (edges between variables vertices and literal vertices, refer to Lemma~\ref{apxlem}). Therefore, $OPT_G = 5n + OPT_S$.
Let an algorithm maximizing the number of edges selects $m$ edges from $E_1$ (edges between clause vertices and literal vertices) along with $5n$ edges from $E_2$.
Each of these $m$ edges implies a satisfied clause in the original MAX-(3,4)-SAT instance. 
Since MAX-(3,4)-SAT cannot be approximated beyond 0.99948 \cite{apx}, $m < 0.99948*OPT_S$. Let $c$ be the best approximation bound for the edge maximum $GLGG$.
Therefore, $c \le \frac{5n + 0.99948*OPT_S}{5n+OPT_S}$.
Since any binary assignment or its complement would necessarily satisfy at least half of the clauses in any given 3-SAT formula,
$OPT_S \ge \frac{k}{2}$. Here $n = \frac{3}{4}k$ implying $c \le 0.999939$. Thus, it is NP-hard to approximate edge maximum $GLGG$ 
within a factor of 0.999939.
\end{proof}

Consider the {\em maximum weight LGG} problem where the edges are assigned weights and we have to compute an 
$LGG$ maximizing the sum of the weights of the selected edges. The edge maximum $GLGG$ problem
is a special case of the maximum weight LGG problem (edge weights are either 0 or 1).
\begin{corollary}
Computing a maximum weight $LGG$ is $APX$-hard.
\end{corollary}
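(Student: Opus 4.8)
The plan is to obtain the result by an approximation-preserving reduction from the edge maximum $GLGG$ problem, which was just shown to be $APX$-hard, to the maximum weight $LGG$ problem. The guiding observation is the one already recorded above: edge maximum $GLGG$ is exactly the restriction of maximum weight $LGG$ to instances whose edge weights lie in $\{0,1\}$. Since $APX$-hardness is inherited by a problem from any of its special cases, it suffices to realize the $GLGG$ problem as such a special case together with a solution map that preserves the objective value.

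First I would set up the instance map. Given a $GLGG$ instance $G=(V,E)$, I build a maximum weight $LGG$ instance on the point set $V$ in which every admissible candidate edge of $E$ receives weight $1$ and every remaining pair of points receives weight $0$. A maximum weight $LGG$ on this weighted point set selects a valid $LGG$ maximizing total weight, and the weight of any selected $LGG$ equals the number of its edges that belong to $E$; thus the weight objective coincides with the $GLGG$ edge-count objective on all solutions that use only weight-$1$ edges.

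The key step is to show that the two optima agree and that approximate solutions transfer back. For this I would use the elementary monotonicity of the $LGG$ constraint under edge deletion: removing an edge from a valid $LGG$ leaves a valid $LGG$, because the constraint only forbids certain \emph{pairs} of simultaneously present edges. Consequently, from any feasible maximum weight $LGG$ I can discard all weight-$0$ edges without losing feasibility and without changing the total weight, producing a valid $GLGG$ on $G$ with at least as many edges. Hence the optimum weight equals the maximum number of edges of a $GLGG$ on $G$, and any $c$-approximate maximum weight $LGG$ yields, after this deletion, a $c$-approximate edge maximum $GLGG$.

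Finally I would conclude. Because the objectives are numerically identical on both sides and the solution map preserves them exactly, the reduction is approximation-preserving with factor $1$: a $PTAS$ for maximum weight $LGG$ would, restricted to $\{0,1\}$-weighted instances, give a $PTAS$ for edge maximum $GLGG$, contradicting its $APX$-hardness (assuming $P \ne NP$). Therefore maximum weight $LGG$ is $APX$-hard. I do not expect a genuine obstacle here; the only point requiring care is verifying that zero-weight edges are never forced into an optimal solution, and the monotonicity of the $LGG$ constraint under edge deletion settles this immediately.
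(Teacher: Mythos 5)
Your proposal is correct and matches the paper's argument, which likewise treats edge maximum $GLGG$ as the restriction of maximum weight $LGG$ to $\{0,1\}$ edge weights and inherits $APX$-hardness from that special case. The extra care you take in noting that weight-$0$ edges can be deleted from any feasible $LGG$ without harming feasibility or weight is a worthwhile detail the paper leaves implicit.
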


\section{Dilation of $LGG$}
Let us define dilation of a geometric graph $G=(V,E)$. Let $D_G(u,v)$ be the distance between two vertices in the geometric graph (sum of length of the edges in the shortest path)
and $D_2(u,v)$ be the Euclidean distance between $u$ and $v$. Let $\delta(u,v) = \frac{D_G(u,v)}{D_2(u,v)}$. The dilation of $G$ is defined as $\delta(G) = \max_{u,v \in V, u \ne v} \delta(u,v)$.
In this section, we focus on computational and combinatorial questions on dilation for $LGGs$.
\subsection{Computation of a minimum dilation $LGG$}
In this section we show that the problem of determining whether there exists an $LGG$ on a given point set with dilation $\le 7$
is NP-hard. The reduction from the partition problem is motivated by a technique in \cite{nph},
where it was shown that computing the minimum dilation geometric graph with bounded number of edges is NP-hard. Since our problem requires
us to construct an $LGG$ instead of any geometric graph with bounded number of edges, the construction needs
to be substantially modified.

The partition problem is defined as follows: Given a set $S$ of positive integers $r_i, 1\le i \le s$ s.t.
$\sum_{r \in S}  = 2R$, can it be partitioned into two disjoint sets $S_1$ and $S_2$ such that $\sum_{r \in S_1}r  = \sum_{r \in S_2}r = R$?
Given an instance of the partition problem, we construct a point set $V$ such that the instance of the partition problem is a $yes$ instance if and only if
there exists an $LGG$ on $V$ with dilation $\le 7$.
Define a parameter $\lambda$ s.t. $2sr^2_{max} < 10^\lambda$ where $r_{max}$ is the largest element of $S$.
For each $r_i \in S$, there is a gadget $G_i$. Define a parameter $\eta_i = 10^{-(\lambda+1)}r_i$ to be used in gadget $G_i$. Note that $\eta_i \le \frac{1}{10}$.

\begin{figure}[ht]
\begin{minipage}[b]{0.4\linewidth}
\centering
\input{dilm1.pstex_t}
\caption{Structure of a basic gadget}
\label{fig2}
\end{minipage}
\hspace{0.5cm}
\begin{minipage}[b]{0.5\linewidth}
\centering
\input{dilm2.pstex_t}
\caption{Basic frame structure}
\label{fig3}
\end{minipage}
\end{figure}
Now we explain the structure of a gadget $G_i$.
Each gadget comprises of 9 points as shown in Figure~\ref{fig2}. Points $x_i$ and $y_i$ are placed $1+2\eta_i$ distance apart.
$x_{i_1}$ and $y_{i_1}$ are placed at the same distance such that $\overline{x_ix_{i_1}}$ and $\overline{y_iy_{i_1}}$ are parallel to each other
and perpendicular to $\overline{x_{i_1}y_{i_1}}$.
Vertex $z_i$ is placed at the midpoint of the line segment $\overline{x_{i_1}y_{i_1}}$. $\overline{x_{i_1}x_{i_3}}$ is perpendicular to $\overline{x_ix_{i_1}}$ and
distance of $x_{i_1}$ from $x_{i_3}$ is $10\eta_i$. For $\epsilon_1 = \frac{10^{-3}}{s^2 10^{2\lambda}}$,  $x_{i_2}$ and $x_{i_3}$ are placed at a distance 
of $c_1\epsilon_1$ along $x$-axis and  $c_2\epsilon_1$ along $y$-axis for suitable constants $c_1$ and $c_2$,
s.t. $\angle x_{i_1}x_{i_2}x_{i_3} \ge \frac{\pi}{2}$. Vertices $y_{i_2}$ and $y_{i_3}$ are placed similarly.
We call edges $(x_{i_3},x_{i_2}),(x_{i_2},x_{i_1}),(x_{i_1},z_i),(z_i,y_{i_1}),(y_{i_1},y_{i_2})$ and $(y_{i_2},y_{i_3})$ {\em basic edges}. It can be verified
that an $LGG$ over the vertices of a gadget must contain all the basic edges to keep dilation bounded by 7.
It can be observed that any other edge will conflict with at least one basic edge with the exception that
the point $x_i$ can be connected to $y_i,x_{i_1}$ or $x_{i_3}$ and similarly $y_i$ can be connected to $x_i,y_{i_1}$ or $y_{i_3}$. Edges $(x_i,x_{i_1})$ and $(y_i,y_{i_1})$ are called {\em vertical edges}
 while $(x_i,x_{i_3})$ and $(y_i,y_{i_3})$ are called {\em slanted edges}. Note that the vertical edge and the slanted edge emerging from the same point $x_i$ or $y_i$
conflict with each other in an $LGG$. Additional points to be described later will ensure that there cannot exist a direct edge between $x_i$ and $y_i$. 
Though both vertices $x_i$ and $y_i$ can have independently either a vertical or a slanted edge
incident to them, if both vertices have slanted edges then $\delta(x_i,y_i) > 7$.

\begin{remark} \label{gadlem}
In a gadget $G_i$, there can be only one slanted edge if $\delta(x_i,y_i) \le 7$.
\end{remark}

\begin{figure}[!h]
\centering
\input{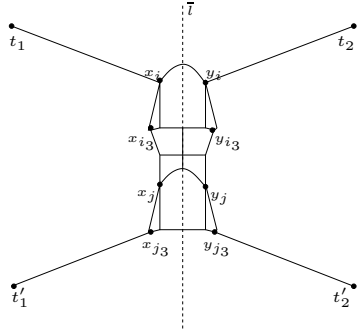}
\caption{Layout of complete structure for $s = 2$}
\label{fig4}
\end{figure}
A frame $F_i$ is used to connect two gadgets $G_i$ and $G_{i+1}$ as shown in Figure~\ref{fig3}. It connects $G_i$ at vertices $x_{i2}$ and $y_{i2}$ and
connects $G_{i+1}$ at vertices $x_{i+1}$ and $y_{i+1}$.
A frame also provides two symmetric paths ($(x_{i+1},x'_i,x_{i_2})$ and $(y_{i+1},y'_i,y_{i_2})$) between two consecutive gadgets.
Let us denote this path length between $i^{th}$ and $i+1^{th}$ gadget as $p_{i,i+1}$. All edges shown in the figure are part of the basic skeleton of a frame and
these edges are included in the set of basic edges. Here we use a technique of placing vertices at very short distance ($0.01$ in our construction) 
from each other along a line. The purpose of this technique is to ensure that all these small edges are selected in the $LGG$. If such an edge is not selected
then any alternate path shall not provide spanning ratio within limit.  We call this technique {\it vertex closing}. It will ensure that in a frame, edges 
are taken only according to our layout. Such a sequence of vertices is called a {\it vertex chain}.
An additional \emph{auxiliary vertex} is placed in each gadget $G_i$ at a distance of $\frac{\epsilon_1\eta_i}{10s}$ from $x_{i2}$ and $y_{i2}$ along the lines $\overline{x_{i2}x'_i},
\overline{x_{i_2}x_{i_1}},\overline{y_{i_2}y'_i}$ and $\overline{y_{i_2}y_{i_1}}$.

A frame also provides a convex cap on $(x_i,y_i)$ in a gadget $G_i$. This is a convex point set with all the points above $\overline{x_iy_i}$ (it need not be a regular curve).
There is a small edge incident to both $x_i$ and $y_i$ from this cap conflicting with the edge $(x_i,y_i)$ and it ensures that $x_i$ and $y_i$ are not directly connected by an edge.
It provides a path between $x_i$ and $y_i$ with spanning ratio just above 7 and for any other pair of vertices in it spanning ratio is bounded by 7. 
On the first gadget $G_1$, such a cap is placed explicitly as shown in Figure~\ref{fig4}.
Now the full structure is constructed as shown in Figure~\ref{fig4}. There is a central vertical line $\overline{l}$, all gadgets are placed along it
keeping vertex $z$ of a gadget on $\overline{l}$ s.t. $\overline{x_{i_1}y_{i_1}}$ is perpendicular to $\overline{l}$ and a frame is placed between two gadgets.
The vertical span for a frame $F_i$ is $\frac{25}{4}$.
There is a total of four extended arms, each of length $h$ with {\it vertex closing} from $G_1$ and $G_s$, each making an angle
$\sin^{-1} (\frac{220}{221})$ w.r.t. $\overline{l}$ (refer to Figure~\ref{fig4}). Here,
\[h = \frac{221}{148}(18s+ (s-1)\frac{175}{4})-\frac{k}{2}+\frac{1}{2}10^{-\lambda}R-\frac{1}{2}10^{-2\lambda}sr^2_{max}\]
where $k = \sum_{i=1}^{s-1}p_{i.i+1}+10\sum_{i=1}^{s}\eta_i$.

Let $V$ be the set of all points introduced above, clearly $|V| = O(s)$. It can be verified that the description complexity of point set $V$ is
polynomial in the size of the partition instance.

\begin{lemma} \label{one}
If the partition problem $S$ is solvable then there exists an $LGG$ on $V$ with dilation not exceeding 7.
\end{lemma}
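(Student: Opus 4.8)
The plan is to convert a balanced partition $S=S_1\cup S_2$ (with $\sum_{r\in S_1}r=\sum_{r\in S_2}r=R$) into an explicit edge set $E$ and then check the two things a dilation-$7$ $LGG$ witness needs: that the chosen edges obey the $LGG$ constraint, and that every vertex pair has spanning ratio at most $7$. For the construction I would first insert every \emph{basic edge}: the six basic edges of each gadget $G_i$, the skeleton edges of every frame $F_i$, all vertex-chain edges of the four arms, and all cap edges. These are forced anyway (vertex closing), and by the discussion following Figure~\ref{fig2} they are mutually conflict-free. The partition is then encoded gadget by gadget: in $G_i$ I add the slanted edge on the left together with the vertical edge on the right when $r_i\in S_1$, and the mirror choice when $r_i\in S_2$. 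This puts \emph{exactly one} slanted edge in each gadget, so by Remark~\ref{gadlem} the in-gadget requirement is met; since a vertical edge on one side and a slanted edge on the other are compatible with each other and with all basic edges, no conflict is created. I deliberately omit the direct edge $(x_i,y_i)$, which is anyway blocked by the short cap edges incident to $x_i$ and $y_i$.

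\emph{LGG validity.} Conflict-freeness is checked locally. Inside a gadget the only non-basic edges are the single vertical/slanted pair just added, already certified compatible. Between gadgets the vertex chains (spacing $0.01$) and the auxiliary vertices placed at distance $\frac{\epsilon_1\eta_i}{10s}$ from $x_{i_2},y_{i_2}$ force the admissible edges to be exactly those of our layout and make each short edge satisfy the $LGG$ constraint against its neighbours. Hence $E$ is a valid $LGG$.

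\emph{Dilation bound.} For dilation it suffices to exhibit, for every pair $(u,v)$, \emph{some} path of length at most $7\,D_2(u,v)$, since the geodesic is no longer. Pairs lying inside a single gadget, frame, arm, or cap are local: vertex closing pins the path to the chain and gives ratio at most $7$; for $(x_i,y_i)$ the route through $z_i$ has ratio just below $7$ (the cap route sits just above $7$ and is never needed). The binding pairs are the opposite arm endpoints, and by symmetry the two diagonals $(t_1,t'_2)$ and $(t_2,t'_1)$ (Figure~\ref{fig4}). For $(t_1,t'_2)$ I take the path that runs down the entire left side and crosses to the right only at the bottom; its length is a fixed skeleton term plus the sum of the left-side slant increments, and each left-slanted gadget contributes $10\eta_i=10^{-\lambda}r_i$ to first order with a second-order $O(\eta_i^2)$ correction. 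Thus the accumulated increment is $10^{-\lambda}\sum_{r_i\in S_1}r_i=10^{-\lambda}R$, which is precisely the value for which the arm length $h$ was calibrated; the symmetric diagonal accumulates $10^{-\lambda}\sum_{r_i\in S_2}r_i=10^{-\lambda}R$ as well, so both diagonals meet the bound and every pair has ratio at most $7$.

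\emph{Main obstacle.} The delicate part is the length bookkeeping for the diagonals: I must verify that the quadratic corrections $\sum O(\eta_i^2)$ are exactly the quantity that the $-\frac{1}{2}10^{-2\lambda}sr^2_{max}$ term of $h$ subtracts off, and that the parameter hierarchy $2sr^2_{max}<10^{\lambda}$, $\epsilon_1=\frac{10^{-3}}{s^2 10^{2\lambda}}$, $\eta_i\le\frac{1}{10}$ keeps every neglected lower-order term small enough that equality at $R$ yields ratio at most $7$ (rather than merely close to $7$), while simultaneously confirming that no non-diagonal pair exceeds the bound. Establishing this exact cancellation built into $h$ is where essentially all of the computation lies.
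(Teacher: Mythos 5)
Your construction is the paper's: all basic edges plus, in each gadget $G_i$, the left slanted edge and right vertical edge when $r_i\in S_1$ and the mirror choice when $r_i\in S_2$, with the dilation bound reduced to a length computation along one side of the structure that is calibrated against the arm length $h$. The genuine error is your identification of the binding pairs as the diagonals $(t_1,t'_2)$ and $(t_2,t'_1)$. The tight pairs are the \emph{same-side} pairs $(t_1,t'_1)$ and $(t_2,t'_2)$. The four arms make an angle $\sin^{-1}(220/221)$ with the vertical line $\overline{l}$, i.e.\ they are nearly horizontal; hence $t_1$ and $t'_1$ are essentially vertically aligned, with $d_2(t_1,t'_1)=3s+(s-1)\frac{25}{4}+\frac{42}{221}h$, while every $t_1$--$t'_1$ path must traverse both arms ($2h$) plus an entire side of the gadget chain. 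That is the pair whose ratio is pushed to exactly $7$. For a diagonal pair the Euclidean distance is dominated by the horizontal separation $\approx\frac{440}{221}h$, while the graph distance is still about $2h$ plus lower-order terms, so its spanning ratio is near $1$ and never binding; if you compare your left-side path length against $d_2(t_1,t'_2)$ the calibration of $h$ will not match, and the ``exact cancellation'' you flag as the main obstacle cannot be established for that pair.

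The computation you outline is otherwise the right one, just aimed at the wrong target: the left-side path must be compared with $d_2(t_1,t'_1)$, where each left-slanted gadget \emph{shortens} the path by $3+10\eta_i-\sqrt{3^2+(10\eta_i)^2}\approx 10\eta_i-O(\eta_i^2)$ (a saving, not an increment --- the slanted edge replaces the two-edge route $x_i\to x_{i_1}\to x_{i_2}$ of length $\approx 3+10\eta_i$), for a total saving of $10^{-\lambda}\sum_{r_i\in S_1}r_i=10^{-\lambda}R$ up to the $10^{-2\lambda}sr^2_{max}$ correction absorbed by $h$. One then repeats the symmetric computation for $(t_2,t'_2)$ using $\sum_{r_i\in S_2}r_i=R$. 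This correction matters beyond bookkeeping: the whole reduction rests on the fact that the left path can only exploit left-slanted gadgets and the right path only right-slanted ones, while Remark~\ref{gadlem} allows at most one slanted edge per gadget, so \emph{both} same-side pairs meet the bound exactly when the partition is balanced. A diagonal routed ``down the left and across at the bottom'' could be served by a single good side, which would decouple the two constraints and void the equivalence with the partition instance.
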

\begin{proof}
 Let $S_1$ and $S_2$ give a partition of set $S$.
 Consider an $LGG$ on the point set $V$ obtained from $S$ that consists of all the basic edges. Additionally, in a gadget $G_i$ if $i \in S_1$ then the edges $(x_i,x_{i_3})$ and $(y_i,y_{i_1})$ are included in the $LGG$.
 Similarly, in a gadget $G_j$ if $j \in S_2$ then the edges $(y_i,y_{i_3})$ and $(x_i,x_{i_1})$ are included in the $LGG$. It is easy to verify that the spanning ratio for all pairs of points is bounded properly. 
 The only non trivial case is for $(t_1,t'_1)$ and $(t_2,t'_2)$.
 Now, we show that if there is a solution to the partition problem, then spanning ratio between $t_1$ and $t'_1$ is bounded by 7.\\
 Let $\mu_i = 10\eta_i-|x_{i_1}x_{i_2}|$ and $\nu_i = |x_{i_2}x_{i_3}|$.
 \[d_G(t_1,t'_1) = 2h+3s+\sum_{i=1}^{s-1}p_{i.i+1}+10\sum_{i=1}^{s}\eta_i - \sum_{i=1}^{s}\mu_i -\sum_{r_i \in S_1}(3+10\eta_i-\sqrt{3^2+(10\eta_i)^2}-\nu_i)\]
 Considering that $\nu_i$ is sufficiently small and $k = \sum_{i=1}^{s-1}p_{i.i+1}+10\sum_{i=1}^{s}\eta_i$
 \[\le 2h+3s+k-\sum_{r_i \in S_1}(10\eta_i - (10\eta_i)^2)\]
 \[= 2h+3s+k-\sum_{r_i \in S_1}10^{-\lambda}r_i+\sum_{r_i \in S_1}10^{-2\lambda}r_i^2\]
 \[\le 2h+3s+k-10^{-\lambda}R+10^{-2\lambda}sr^2_{max}\]
 \[=7(3s+(s-1)\frac{25}{4})+\frac{294}{221}h\]
 \[=7d_2(t_1t'_1)\]
 Symmetrically, we can argue $\delta(t_2,t'_2) \le 7$.
\end{proof}

\begin{lemma} \label{two}
 If there exists an $LGG$ on $V$ with dilation less than or equal to 7 then there exists a solution for the partition problem over $S$.
\end{lemma}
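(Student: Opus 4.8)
The plan is to reverse the encoding of Lemma~\ref{one}: from a dilation-$7$ $LGG$ I will read off, in each gadget, which side carries a slanted edge, and then show that the two long-path constraints force those choices to balance the weights $r_i$ exactly.

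I would first fix the forced part of the graph. Any $LGG$ of dilation at most $7$ must contain every basic edge (as already argued for a single gadget and enforced across frames by the vertex-closing chains), and the convex cap on each gadget rules out the direct edge $(x_i,y_i)$. Hence inside $G_i$ the only remaining freedom is whether $x_i$ carries its vertical edge $(x_i,x_{i_1})$ or its slanted edge $(x_i,x_{i_3})$, and likewise at $y_i$; moreover, if $\delta(x_i,y_i)\le 7$ then at least one such edge must be present at each of $x_i,y_i$, since otherwise the only route between them would run through the cap, whose spanning ratio already exceeds $7$. Let $A$ be the set of indices $i$ whose gadget carries the slanted edge $(x_i,x_{i_3})$, and let $B$ be the analogous set for the edges $(y_i,y_{i_3})$. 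Remark~\ref{gadlem} gives $A\cap B=\emptyset$.

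The core estimates concern the two distinguished pairs $(t_1,t_1')$ and $(t_2,t_2')$, which by construction are the only nontrivial cases (cf. the proof of Lemma~\ref{one}). The frame skeleton together with vertex closing forces the shortest $t_1$--$t_1'$ path to descend along the $x$-side, entering each gadget at $x_i$ and leaving at $x_{i_2}$; such a gadget takes its slanted shortcut exactly when $i\in A$, shortening the path by the same per-gadget amount $\sigma_i=3+10\eta_i-\sqrt{3^2+(10\eta_i)^2}-\nu_i$ that appears in Lemma~\ref{one}. Writing $d_G(t_1,t_1')$ for this edge selection as a fixed length minus $\sum_{i\in A}\sigma_i$, the hypothesis $\delta(t_1,t_1')\le 7$ becomes $\sum_{i\in A}\sigma_i\ge 10^{-\lambda}R-10^{-2\lambda}sr^2_{max}-\sum_i\mu_i$. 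Substituting $\eta_i=10^{-(\lambda+1)}r_i$ and $\sigma_i=10^{-\lambda}r_i-\rho_i$ with $\rho_i=\sqrt{9+(10\eta_i)^2}-3+\nu_i\ge 0$, this rearranges to $\sum_{i\in A}r_i\ge R-\Delta$ for a slack $\Delta$ that collects the quadratic loss and the placement errors. The symmetric path $(t_2,t_2')$ runs down the $y$-side and is shortened precisely at the indices of $B$, giving $\sum_{i\in B}r_i\ge R-\Delta$ in the same way.

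The step I expect to be the crux is bounding $\Delta<1$ so that these inequalities survive rounding. Here $\Delta\le 10^{-\lambda}sr^2_{max}+10^{\lambda}\sum_i\mu_i$: the first term is below $\tfrac12$ precisely because $\lambda$ was chosen with $2sr^2_{max}<10^{\lambda}$, while each $\mu_i$ is $O(\epsilon_1)$ with $\epsilon_1=10^{-3}/(s^2 10^{2\lambda})$, so the factor $10^{\lambda}$ leaves $10^{\lambda}\sum_i\mu_i$ negligible. Thus $\Delta<1$, and since $\sum_{i\in A}r_i$ and $R$ are integers, $\sum_{i\in A}r_i\ge R$; symmetrically $\sum_{i\in B}r_i\ge R$. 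I would isolate this as a short rounding claim, so that the argument does not rest on the informal ``sufficiently small'' wording used in Lemma~\ref{one}. Finally, adding the two bounds gives $\sum_{i\in A}r_i+\sum_{i\in B}r_i\ge 2R=\sum_{i=1}^{s}r_i$, while $A\cap B=\emptyset$ forces the reverse inequality $\sum_{i\in A}r_i+\sum_{i\in B}r_i\le\sum_{i=1}^{s}r_i$. Equality holds throughout, so $A$ and $B$ partition $\{1,\dots,s\}$ (in particular no gadget is left with two vertical edges) and $\sum_{i\in A}r_i=\sum_{i\in B}r_i=R$. Taking $S_1=\{r_i:i\in A\}$ and $S_2=\{r_i:i\in B\}$ solves the partition instance.
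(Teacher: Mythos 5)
Your proposal is correct and follows essentially the same route as the paper: the slanted edges encode the candidate partition, the $t_1$--$t_1'$ and $t_2$--$t_2'$ path lengths measure $\sum_{S_1}r_i$ and $\sum_{S_2}r_i$ up to error terms controlled by the choices of $\lambda$ and $\epsilon_1$, and integrality of the $r_i$ forces exact balance. The only differences are presentational --- you argue directly with both symmetric constraints and an explicit rounding claim $\Delta<1$, where the paper argues by contradiction from $\sum_{S_1}r_i\le R-1$ on one side w.l.o.g. --- which makes the ``sufficiently small'' bookkeeping cleaner but is the same underlying argument.
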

\begin{proof}
 Let us assume there exists an $LGG$ with dilation less than or equal to 7 and the partition problem has no solution. Let us first consider the case when
 the graph is composed of only basic edges and two additional edges (vertical/slanted edges from $x_i$ and $y_i$) in a gadget $G_i$. By Remark~\ref{gadlem}, there can be at most one slanted edge in any gadget.
 In this case it can be shown that $\delta(t_1,t'_1) > 7$ or $\delta(t_2,t'_2) > 7$.
 Let us assume there exists an $LGG$ with dilation less than or equal to 7 and the partition problem has no solution. Let $S_1$ be the set
 of all $r_i \in S$ such that the corresponding gadget has left edge slanted. Let $S_2 = S \setminus S_1$. Lets assume that $S_1$ and $S_2$ are not solutions
 of the original partition problem. Then w.l.o.g. it can be assumed that $\sum_{r_i \in S_1}r_i \le R -1$.\\
 Let $\mu_i = 10\eta_i-|x_{i_1}x_{i_2}|$ and $\nu_i = |x_{i_2}x_{i_3}|$.
 \[d_G(t_1,t'_1) = 2h+3s+\sum_{i=1}^{s-1}p_{i.i+1}+10\sum_{i=1}^{s}\eta_i-\sum_{i=1}^{s}\mu_i-\sum_{r_i \in S_1}(3+10\eta_i-\sqrt{3^2+(10\eta_i)^2}-\nu_i)\]
 Considering $\mu_i$ is sufficiently small $\forall i, 1 \le i \le s$,
 \[\ge 2h+3s+k-\sum_{r_i \in S_1}10\eta_i\]
 \[= 2h+3s+k-\sum_{r_i \in S_1}10^{-\lambda}r_i\]
 \[\ge 2h+3s+k-10^{-\lambda}R+10^{-\lambda}\]
 \[>7(3s+(s-1)\frac{25}{4})+\frac{294}{221}h\]
 \[=7d_2(t_1t'_1)\]
 If all basic edges are present in a gadget, any more edge can not be added further except two edges in each gadget.
 Let us consider the situation when the $LGG$ has edges other than the basic edges and two vertical/slanted edges for each gadget.
 Recall that unless all the basic edges in a frame are taken, the spanning ratio will not be bounded. It can be observed that if all the basic edges are selected in a frame,
 it cannot provide a path between any two vertices of a gadget with bounded dilation. Also recall that in a gadget all basic edges are necessary to keep dilation bounded and the edge $(x_i,y_i)$ is forbidden.
 Any other edge in a gadget (except two vertical/slanted edges) would conflict with some basic edge. Therefore, no other edge can be selected in a gadget.
 If there is an edge like $(x_i,y_{i1})$ or $(x_i,z_i)$ then
 it will not be possible to have the basic edge $(x_{i1},z_i)$ and for this pair of points the spanning ratio will not be bounded by 7. The frame above a
 gadget (an explicit cap above the first gadget) will forbid direct edge $(x_i,y_i)$ and though it will give an alternative path, it would have spanning ratio above 7.
 It can also be observed that the edges across any two distinct gadgets, two distinct frames, or any other
 edge from a gadget vertex to a frame vertex cannot exist due to conflict with a basic edge or a vertical/slanted edge.
 Though in a frame, near a junction of vertex chains there can be an edge across two different chains,
 that does not cause any problem since it does not provide a shorter path than the existing shortest path across gadgets ($(x_{i+1},x'_i,x_{i_2})$ and $(y_{i+1},y'_i,y_{i_2})$). In a path
 $(x_{i_1},x_{i_2},x'_{i-1})$ it is possible to take a shorter route missing $x_{i_2}$ and taking an edge between immediate neighbors of $x_{i_2}$ along the lines $\overline{x_{i_2}x'_i}$ and $\overline{x_{i_2}x_{i_1}}$.
 Recall that auxiliary vertices are placed very close to $x_{i_2}$ on these lines ensuring that total saving
 can be only $O(\epsilon_1)$ which is not sufficient to improve the dilation (refer calculations above). Similar arguments can be applied on the symmetric other side of the gadget when the path $(y_{i_1},y_{i_2},y'_{i-1})$ can miss $y_{i_2}$.
 There is a total four long arms. Vertex closing will ensure that no edge will be possible
 across them or to any other vertex. Thus, all the $LGGs$ on $V$ would have dilation $> 7$ leading to a contradiction.
\end{proof}

\begin{theorem}
 Given a point set $P$, it is NP-hard to find whether there exists an $LGG$ with dilation less than or equal to a given value $k$.
\end{theorem}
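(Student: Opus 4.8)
The plan is to derive the theorem directly from Lemma~\ref{one} and Lemma~\ref{two}, which together supply the biconditional at the heart of the reduction. Fixing the target dilation to $k = 7$, Lemma~\ref{one} shows that if the partition instance $S$ is a yes-instance then $V$ admits an $LGG$ of dilation at most $7$, while Lemma~\ref{two} establishes the converse: any $LGG$ on $V$ of dilation at most $7$ yields a valid partition of $S$. Hence $S$ is a yes-instance of partition if and only if $V$ admits an $LGG$ with dilation $\le 7$, and the entire logical content of the reduction is already in place.

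It remains to argue that the map $S \mapsto V$ is a genuine polynomial-time many-one reduction. First I would confirm that the number of points is polynomial: each gadget and each frame contributes a constant number of vertices, and each vertex chain and extended arm uses polynomially many vertices placed at spacing $0.01$, so $|V|$ is polynomial in $s$ (the excerpt already records $|V| = O(s)$ for the gadget/frame skeleton). Second, and this is the step requiring care, I would verify that every coordinate has polynomially bounded bit complexity. The construction scales distances by quantities such as $\eta_i = 10^{-(\lambda+1)}r_i$ and $\epsilon_1 = \frac{10^{-3}}{s^2 10^{2\lambda}}$, where $\lambda$ is the smallest integer with $2s r^2_{max} < 10^{\lambda}$, so that $\lambda = O(\log s + \log r_{max})$ is linear in the input size. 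Consequently all coordinates are rationals whose numerators and denominators need only $O(\mathrm{poly}(\log r_{max}, \log s))$ bits, and the whole point set $V$ can be written down in time polynomial in the size of the partition instance.

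Finally, since the partition problem is NP-complete, the existence of this polynomial-time reduction establishes that deciding whether $V$ admits an $LGG$ of dilation $\le 7$ is NP-hard. As this is precisely the stated decision problem for the particular value $k = 7$, it follows that determining whether a point set $P$ admits an $LGG$ with dilation at most a given $k$ is NP-hard, which completes the proof.

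I expect the main obstacle to be the polynomial-bit-complexity check in the second paragraph rather than the logical equivalence, since the latter is handed to us by the two lemmas. One must be sure that the cascade of exponentially small scale factors $\eta_i$ and $\epsilon_1$, together with the closely spaced vertex chains, do not force coordinates with super-polynomially many bits, and that the derived arm length $h$ (which mixes the terms $10^{-\lambda}R$ and $10^{-2\lambda}s r^2_{max}$) likewise stays within polynomial precision. Because $\lambda$ grows only logarithmically in the numeric values appearing in $S$, this verification goes through and the reduction is valid.
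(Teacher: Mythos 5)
Your proposal is correct and follows essentially the same route as the paper, which likewise derives the theorem directly from Lemma~\ref{one} and Lemma~\ref{two} with $k=7$. Your additional check that $\lambda = O(\log s + \log r_{\max})$ keeps all coordinates within polynomial bit complexity is a welcome elaboration of the paper's one-line remark that the description complexity of $V$ is polynomial in the size of the partition instance.
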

\begin{proof}
The proof can be inferred by Lemma~\ref{one} and Lemma~\ref{two}.
\end{proof}

\subsection{Combinatorial bounds on the dilation} \label{conn}
In this section we study combinatorial bounds for dilation on $LGGs$. A pointset giving lower bound of $\Omega(\sqrt{n})$ for dilation on the Gabriel Graphs was proposed in~\cite{beta}. The same pointset (shown in Figure~\ref{dfig1})
can also be used to show that any $LGG$ on this pointset has dilation of $\Omega(\sqrt{n})$. This bound matches with the known upper bound of $O(\sqrt{n})$
for the dilation of the Gabriel graphs (a Gabriel graph is also an $LGG$) on any pointset~\cite{beta}. We also show that for some pointsets where the Gabriel graph has dilation of $\Omega(\sqrt{n})$, there exists an
$LGG$ with dilation of $O(1)$.

 \begin{lemma} \label{dl}
  There exists a point set over which any LGG has dilation $\Omega(\sqrt{n})$.
 \end{lemma}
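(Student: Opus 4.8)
The plan is to reuse the very point set $P$ of~\cite{beta} depicted in Figure~\ref{dfig1}, for which the Gabriel graph already has dilation $\Omega(\sqrt n)$, and to upgrade this bound so that it holds for \emph{every} LGG on $P$, not merely for the Gabriel graph. Since $GG(P)$ is itself a valid LGG, the entire content of the lemma lies in the word ``any'': I must rule out the possibility that some cleverly chosen LGG introduces shortcut edges that the Gabriel condition forbade but the weaker LGG condition might permit. First I would fix the distinguished pair $(s,t)$ that realizes the large dilation in~\cite{beta} --- a pair whose Euclidean distance $D_2(s,t)$ is small while any path along the structure has length $\Omega(\sqrt n \cdot D_2(s,t))$ --- and reduce the lemma to showing that in an arbitrary LGG $H$ on $P$, every $s$--$t$ path has length $\Omega(\sqrt n \cdot D_2(s,t))$.

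The heart of the argument is to show that no \emph{shortcut edge} can appear in $H$. Call an edge $(a,b)$ a shortcut if it joins two points that are far apart in the order along the arc of Figure~\ref{dfig1}. I would exploit the (near-)convex position of $P$: for such a chord the immediate arc-neighbour $a'$ of $a$ lying on the $b$-side falls inside the disk with diameter $\overline{ab}$ and, by convexity, subtends an obtuse angle $\angle a a' b \ge \tfrac{\pi}{2}$, so the edges $(a,a')$ and $(a,b)$ conflict under the LGG constraint. Hence if $(a,a')$ is present, the shortcut $(a,b)$ is impossible. The remaining possibility is that $H$ omits the short edge $(a,a')$; here I would argue that this is self-defeating, because in the thin arc of Figure~\ref{dfig1} there is no alternate short connection between consecutive points, so the detour forced on the pair $(a,a')$ is itself $\Omega(\sqrt n \cdot D_2(a,a'))$, while merely ``skipping'' a point via its two neighbours saves only $O(1)$ arc length. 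Combining the cases, every $s$--$t$ path in $H$ is confined to the structural arc and therefore has length $\Omega(\sqrt n \cdot D_2(s,t))$, giving $\delta(H) = \Omega(\sqrt n)$.

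The main obstacle is precisely the gap between the Gabriel condition (global emptiness of the diametral disk) and the LGG condition (emptiness only of \emph{neighbours}): a long chord that a Gabriel graph rules out automatically is ruled out in an LGG only if the blocking point is genuinely adjacent to one endpoint. I therefore expect the delicate part to be verifying, for the concrete coordinates of Figure~\ref{dfig1}, that the obtuse-angle inequality $\angle a a' b \ge \tfrac{\pi}{2}$ holds for \emph{all} shortcut chords (a short computation using the curvature and spacing of the arc), and making the case analysis on omitted short edges fully rigorous --- in particular handling chains of consecutively omitted edges, so that the ``self-defeating detour'' bound cannot be escaped by omitting many short edges at once and stitching together several local skips into one long shortcut. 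Once these are in place the $\Omega(\sqrt n)$ bound follows, matching the upper bound of~\cite{beta}.
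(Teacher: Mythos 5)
Your proposal follows essentially the same route as the paper's proof: the same two-slanted-line point set, the same observation that any long chord places the adjacent point of one endpoint inside its diametral disk (so the chord conflicts with the short adjacent edge under the LGG constraint), and the same dichotomy that omitting a short adjacent edge already forces $\Omega(\sqrt{n})$ detours, so every LGG path between the bottom pair must traverse the whole structure. The paper's own write-up is in fact less careful than you are about the ``stitched skips'' case, but the argument and conclusion coincide.
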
 
\begin{proof}
In the lower bound construction there is a stack of horizontal lines with two points on each line. Going upwards, distance between these points decreases monotonically as
 shown in Figure~\ref{dfig1}.
 Let $l_1, l_2, \ldots, l_\frac{n}{2}$ be the horizontal lines. The abscissa of points on $l_i$ are $\frac{i-1}{n}$ and $r-\frac{i-1}{n}$ respectively where $r \ge \frac{1}{2n} + \frac{3}{2}$.
 Two horizontal lines are separated vertically by a distance of $\frac{1}{\sqrt{n}}$. Let $P = \{p_1,p_2,\ldots, p_n\}$ be the set of points and the points
 on line $l_i$ are numbered as $p_{2i-1}$ and $p_{2i}$ respectively. In this structure points can also be views as placed on two slanted lines. On any slanted line two adjacent points have distance $\theta(\frac{1}{\sqrt{n}})$. 
 If an $LGG$ does not have an edge connecting a pair of adjacent points on a slanted line the spanning ratio for this pair will be $\Omega(\sqrt{n})$.
 Now, let us consider the edges with end points on different slanted lines. Apart from the edge $(p_{\frac{n}{2}}, p_{{\frac{n}{2}}+1})$, all such edges
 would necessarily conflict with at least one of the edge between two adjacent points on a slanted line. Thus, such edges cannot be taken. 
 Now, consider the two vertices $p_1, p_2$ on $l_1$. The only path between $p_1$ and $p_2$ passes through all the points and has path length 
 $\Omega(\sqrt{n})$. Thus, the spanning ratio of any $LGG$ on this point set is $\Omega(\sqrt{n})$.
\end{proof}
 
 \begin{figure}[ht]
 \begin{minipage}[b]{0.4\linewidth}
 \centering
 \input{dilll.pstex_t}
 \caption{Point set where $LGG$ has dilation $\Omega(\sqrt{n})$}
 \label{dfig1}
 \end{minipage}
 \hspace{0.5cm}
 \begin{minipage}[b]{0.4\linewidth}
 \centering
 \input{dill0.pstex_t}
 \caption{Point set where $LGG$ has better dilation than Gabriel graph}
 \label{dfig2}
 \end{minipage}
 \end{figure}
 
 \begin{lemma} \label{adilation}
  There exists a point set $P$  such that the Gabriel Graph on $P$ has dilation $\Omega({\sqrt{n}})$ whereas there exists an $LGG$ on $P$ with dilation $O(1)$.
 \end{lemma}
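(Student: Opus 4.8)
The plan is to reuse the tall, thin stacked-lines configuration of Figure~\ref{dfig1} (two rails of $\Theta(n)$ points each, separated horizontally by a constant $r=\Theta(1)$ and stacked with vertical gap $\Theta(1/\sqrt{n})$, so the whole set has diameter $\Theta(\sqrt{n})$), possibly augmented with a few auxiliary ``witness'' points. On this point set I must exhibit two things: a single $LGG$ whose dilation is $O(1)$, and a proof that the (unique) Gabriel graph has dilation $\Omega(\sqrt{n})$.

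For the Gabriel lower bound I would argue as in the proof of Lemma~\ref{dl}, but replacing the $LGG$ constraint by the global Gabriel disk condition. Each horizontal ``rung'' edge $(p_{2i-1},p_{2i})$ has length $\Theta(1)$, so the disk having it as diameter has radius $\Theta(1)$ and contains points lying on the intermediate lines; hence no rung is a Gabriel edge, and the same holds for every long cross edge. Consequently the two rails can be bridged only near the top, where they come closest, so the bottom pair $p_1,p_2$ (at Euclidean distance $\Theta(1)$) is joined in the Gabriel graph only by a path that climbs $\Theta(\sqrt{n})$ up one rail and descends the other. This forces $\delta(p_1,p_2)=\Omega(\sqrt{n})$.

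For the $O(1)$-dilation $LGG$ I would keep all rail edges and add back the rung edges (the dotted edges indicated in Figure~\ref{dfig2}). Rungs on distinct lines share no endpoint, so they never conflict with one another under the (local) $LGG$ constraint; the work is to place the points so that a rung does not conflict with the rail edges incident to its endpoints and so that the rails remain connected. Granting this, the dilation bound is routine: same-rail pairs are essentially collinear and give ratio close to $1$, a same-line pair uses the nearest rung, and any mixed pair climbs a rail to a nearby rung and crosses, which adds only lower-order length compared with the $\Theta(\sqrt{n})$ vertical separation that dominates most Euclidean distances.

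The hard part is the reconciliation between these two halves: each rung must simultaneously be a \emph{valid} $LGG$ edge (no graph-neighbor of either endpoint lies inside its diameter disk) and a \emph{non}-Gabriel edge (some point of $V$ lies inside that disk). This is possible only because the $LGG$ condition is local, inspecting neighbors, whereas Gabriel's condition is global. The crux is therefore to arrange that the point blocking each rung in the Gabriel sense is \emph{not} a neighbor of the rung's endpoints in the chosen $LGG$; this is what may force the insertion of witness points, or a spacing of the rungs so that excluding the conflicting rail edges does not sever a rail. Verifying that such a valid, connected, $O(1)$-spanner $LGG$ with every rung non-Gabriel really exists is the heart of the argument, while the spanning-ratio estimates themselves are mechanical.
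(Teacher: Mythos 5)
Your overall strategy is the paper's: reuse the two-rail point set of Lemma~\ref{dl}, keep the Gabriel lower bound, and exhibit a single $LGG$ containing the cross (``rung'') edges. You also correctly isolate the crux --- a rung must be a valid $LGG$ edge while remaining non-Gabriel, which is possible only because the $LGG$ test inspects neighbors rather than all of $V$. However, your primary plan, ``keep all rail edges and add back the rung edges,'' fails on this point set: the proof of Lemma~\ref{dl} itself rests on the fact that (apart from the topmost one) each rung conflicts with a rail edge incident to one of its endpoints, so a rung and that rail edge cannot coexist in any $LGG$. Re-spacing the points so that they do coexist is not a way out either, because the adjacent rail points are exactly the witnesses that keep the rungs out of the Gabriel graph; pushing them out of the rungs' diameter disks would destroy the $\Omega(\sqrt{n})$ lower bound you need for the Gabriel side.

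The missing construction --- which is what the paper actually does --- is to add one auxiliary point between each consecutive pair of rail points, placed on the \emph{exterior} side of the rail. The $LGG$ then omits the direct rail edges and connects consecutive rail points by the two-edge detour through the exterior point. Since the exterior point lies outside the rung's diameter disk (that disk sits entirely on the interior side of the rung's endpoint) and the detour edges are short, neither edge of the detour conflicts with the rung; hence every rung becomes a legal $LGG$ edge, the rails stay connected with only constant-factor length overhead, and the adjacent rail points --- now non-neighbors --- still lie inside the rungs' disks and exclude them from the (unique) Gabriel graph. With that in place your dilation estimates go through as you describe. As written, though, the step you yourself call ``the heart of the argument'' is left unproved, and the one concrete plan you do commit to would not work.
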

 \begin{proof}
 There are points on two slanted lines as described in Figure~\ref{dfig1} (used for construction in Lemma~\ref{dl}) and additionally there is also a point between each successive pair of points 
 on a slanted line placed at the exterior side of it as shown in Figure~\ref{dfig2}. 
 While the Gabriel graph cannot have edges across the slanted lines (except the edge across the highest points on both the lines), these additional points can provide an alternate path
 to an $LGG$ instead of a path along the slanted line. Hence an $LGG$ can have edges across the two slanted lines. Thus, while the Gabriel Graph has only dark edges as shown in Figure~\ref{dfig2}, an $LGG$ can have 
 the dotted edges along with dark edges. This $LGG$ has $O(1)$ dilation, while the dilation of the Gabriel Graph is $\Omega({\sqrt{n}})$.
 \end{proof}
\section{Verification Algorithm for $LGG$}
Given a geometric graph $G = (V,E)$, let us consider the problem of deciding whether $G$ is a valid $LGG$.
It has to be verified that no two edges conflict with each other.\\
\begin{figure}[!h] 
\centering
\input{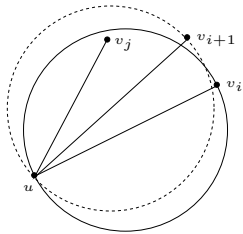}
\caption{Checking for conflicts in an $LGG$}
\label{figver}
\end{figure}
For any $u \in V$, let  $\mathcal{L}_u$ be a circular list storing all neighbors of vertex $u$ in counterclockwise order.
$G$ is a valid $LGG$ if edges from a vertex $u$ to any two consecutive members in $\mathcal{L}_u$ do not conflict with each other $\forall u \in V$. This claim follows directly from the Lemma stated below. 

\begin{lemma}\label{alglem}
Let $u$ be any vertex in $G$ and $\mathcal{L}_u = \{v_1,v_2,\ldots,v_l\}$. If edges $(u,v_i)$ and $(u,v_j)$ conflict with each other such that $i \le j - 2$, then there exist a '$k$' such that $i \le k \le j-1$
and the edge $(u,v_k)$ conflicts with the edge  $(u,v_{k+1})$.
\end{lemma}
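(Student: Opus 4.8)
\emph{Reformulating the conflict condition.} The plan is to turn the geometric ``conflict'' into a one-line algebraic inequality and then argue by contraposition, chaining local non-conflicts into a global one. I would place $u$ at the origin and write each neighbour in polar form $v_m = d_m(\cos\theta_m,\sin\theta_m)$ with $d_m=|uv_m|>0$; since $\mathcal{L}_u$ is sorted counterclockwise I may take $\theta_i\le\theta_{i+1}\le\cdots\le\theta_j$. Recall that $(u,v_a)$ and $(u,v_b)$ conflict iff $\angle uv_bv_a\ge\frac{\pi}{2}$ or $\angle uv_av_b\ge\frac{\pi}{2}$. Computing the angle at $v_b$ from the sign of $\vec{v_bu}\cdot\vec{v_bv_a}=d_b^2-d_ad_b\cos(\theta_b-\theta_a)$ gives
\[
\angle uv_bv_a\ge\tfrac{\pi}{2}\iff d_b\le d_a\cos(\theta_b-\theta_a),
\]
and symmetrically $\angle uv_av_b\ge\frac{\pi}{2}\iff d_a\le d_b\cos(\theta_b-\theta_a)$. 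Because the $d_m$ are positive, any conflict forces $\cos(\theta_b-\theta_a)>0$, so the angular span of a conflicting pair is below $\frac{\pi}{2}$; applied to $(i,j)$ this yields $\theta_j-\theta_i<\frac{\pi}{2}$, and hence every consecutive gap $\theta_{k+1}-\theta_k$ in the range is also below $\frac{\pi}{2}$.

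\emph{Contraposition and telescoping.} I would assume that no consecutive pair conflicts and deduce that $(u,v_i)$ and $(u,v_j)$ do not conflict. Writing $c_k=\cos(\theta_{k+1}-\theta_k)\in(0,1]$, non-conflict of $(u,v_k),(u,v_{k+1})$ reads
\[
d_{k+1}>d_kc_k\quad\text{and}\quad d_k>d_{k+1}c_k,\qquad i\le k\le j-1.
\]
Since each $c_k>0$, multiplication preserves the direction of these inequalities, so chaining the first family gives $d_j>d_i\prod_{k=i}^{j-1}c_k$ and chaining the second gives $d_i>d_j\prod_{k=i}^{j-1}c_k$.

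\emph{Collapsing the product of cosines.} It then remains to compare $\prod_k c_k$ with $\cos(\theta_j-\theta_i)=\cos\!\big(\sum_k(\theta_{k+1}-\theta_k)\big)$. As the gaps are nonnegative and sum to less than $\frac{\pi}{2}$, the identity $\cos(\alpha+\beta)=\cos\alpha\cos\beta-\sin\alpha\sin\beta\le\cos\alpha\cos\beta$ (both sines nonnegative) gives, by induction, $\cos(\theta_j-\theta_i)\le\prod_{k=i}^{j-1}c_k$. Combining with the previous step, $d_j>d_i\cos(\theta_j-\theta_i)$ and $d_i>d_j\cos(\theta_j-\theta_i)$, which by the reformulation says exactly that $(u,v_i)$ and $(u,v_j)$ do \emph{not} conflict, contradicting the hypothesis.

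\emph{Main obstacle.} The delicate part is not the telescoping but the set-up: one must verify that a conflict genuinely forces the angular span below $\frac{\pi}{2}$ (this is what keeps every $c_k$ strictly positive, so the chained inequalities retain their direction and the cosine-product bound applies), and that the in-between neighbours $v_{i+1},\dots,v_{j-1}$ really occupy the short arc from $v_i$ to $v_j$ rather than straddling the wrap-around of the circular list. Once the span is controlled, the cosine subadditivity and the two telescopes are routine.
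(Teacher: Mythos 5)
Your argument is correct, and at its core it is the same reduction the paper uses: both proofs rest on the fact that if $(u,v_i)$ and $(u,v_j)$ conflict but $(u,v_i)$ and $(u,v_{i+1})$ do not, then $(u,v_{i+1})$ and $(u,v_j)$ must conflict. The paper packages this as a minimal-counterexample argument and disposes of the key step with ``it can be trivially observed that the circle with diameter $\overline{uv_{i+1}}$ will contain $v_j$''; your version makes that observation precise, since the inequality $\cos\gamma\le\cos\beta\cos(\gamma-\beta)$ for $0\le\beta\le\gamma<\frac{\pi}{2}$ is exactly what justifies it, and your product bound $\cos(\theta_j-\theta_i)\le\prod_k c_k$ together with the two telescopes is the all-at-once form of the paper's one-step induction. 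In that sense your write-up is more self-contained and more rigorous than the original. The obstacle you flag --- that a conflict only forces $\cos(\theta_j-\theta_i)>0$, which permits the cumulative counterclockwise span to exceed $\frac{3\pi}{2}$ rather than lie below $\frac{\pi}{2}$, i.e.\ the short arc between $v_i$ and $v_j$ may wrap past $v_l,v_1$ --- is a genuine subtlety, but it is one the paper's proof silently assumes away as well (its ``trivial'' containment step likewise needs $v_{i+1}$ to lie on the short arc). It does not threaten the correctness of the verification algorithm, because the list is circular and applying your argument along the short arc still yields a conflicting circularly-consecutive pair; it only means the lemma's literal index range $i\le k\le j-1$ should be read circularly (or the pair re-indexed along the short arc). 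With that caveat made explicit, your proof is complete.
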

\begin{proof}
We give a proof by contradiction.
Assume that the edges $(u,v_i)$ and $(u,v_j)$ conflict with each other and $(u,v_k)$ does not conflict with $(u,v_{k+1})$ for any $k$, s.t $i \le k < j$. Let us assume w.l.o.g. that
$(u,v_i)$ and $(u,v_j)$ are the closest pair of conflicting and non-successive edges s.t.  $i \le j - 2$, i.e. if two edges $(u,v_i')$ and $(u,v_j')$ conflict with each other and $i \le i' < j' \le j$ then $j' = i'+1$.
Since $(u,v_i)$ and $(u,v_j)$ conflict with each other, let us assume w.l.o.g. that $v_j$ lies within the circle with diameter $\overline{uv_i}$ as shown in Figure~\ref{figver}.
By assumption $(u,v_i)$ and $(u,v_{i+1})$ do not conflict, so $v_{i+1}$ must lie outside this circle and similarly $v_i$ will lie outside the circle with diameter $\overline{uv_{i+1}}$.
Recall that two circles can intersect only at two points. Now it can be trivially observed that the circle with diameter $\overline{uv_{i+1}}$ will contain $v_j$. Thus, $(u,v_{i+1})$ and
$(u,v_j)$ conflict with each other. This implies that either $(u,v_i)$ and $(u,v_j)$ are not the closest pair of conflicting edges or $(u,v_{i+1})$ and $(u,v_j)$ are successive edges
and they do conflict with each other. In either case we have a contradiction of the original assumption.
\end{proof}

The argument above directly implies a verification algorithm for $LGG$. It involves computing $\mathcal{L}_u, \forall u \in V$ that can be done by angular sorting of the neighbors of each vertex. It can be implemented in $O(|E|\log |V|)$ time.
Scanning each vertex $u$ and verifying that edges to two consecutive members in $\mathcal{L}_u$ do not conflict takes $O(|V|+|E|)$ time. Therefore, this algorithm has time complexity of $O(|E| \log |V| + |V|)$.

\section{Concludeing Remarks}
In this paper, we have introduced \emph{Generalized Locally Gabriel Graphs} in the context when certain pair of vertices may not induce edges irrespective of geometric proximity and initiated the study of computing an
edge maximum $GLGG$. We showed the problem to be NP-hard and also APX-hard. An interesting problem is to design a polynomial time approximation algorithm for the edge maximum $GLGG$ problem.
An interesting problem is to determine whether  an edge maximum $LGG$ can be computed in polynomial time for a given point set. We have shown that computing an $LGG$ minimizing dilation in NP-hard for a given point set.
A natural question is to study this problem for approximability. Another interesting set of questions is to compute $LGGs$ that optimize other parameters like maximum independent set, chromatic number etc.


\end{document}